%You can leave alone everything before Line 79.
\documentclass[11pt]{article}
\usepackage[small, bf]{caption}
\usepackage{indentfirst,url,amsfonts, amsthm, amsmath, amssymb,color, enumerate,algorithmic,verbatim,printlen}
\usepackage[boxed]{algorithm}
\usepackage[top=1.0in, bottom=1.0in, left=1.0in, right=1.0in]{geometry}
\usepackage{graphicx}
% Page layout
%\setlength{\textheight}{9.5in}
%\setlength{\columnsep}{2.0pc}
%\setlength{\textwidth}{6.5in}
%\setlength{\topmargin}{-0.5in}
%\setlength{\headheight}{0.0in}
%\setlength{\headsep}{0.0in}
%\setlength{\oddsidemargin}{0in}
%\setlength{\evensidemargin}{0in}
%\setlength{\parindent}{1pc}

%\renewcommand{\baselinestretch}{1.1}
% Macros for course info
\newcommand{\ignore}[1]{}

% Theorem-like structures are numbered within SECTION units
%\theoremstyle{plain}
\newtheorem{theorem}{Theorem}[section]
\newtheorem{lemma}[theorem]{Lemma}

%definition style
%\theoremstyle{definition}

%\newtheorem{algorithm}{Algorithm}
%remark style
%\theoremstyle{remark}
\newtheorem{remark}[theorem]{Remark}

%\newtheorem{question}[theorem]{Question}

%\newtheorem{claim}[theorem]{Claim}
%	\newtheoremstyle{LemmaNum}         
%		{\topsep}{\topsep}              %%% space between body and thm 
 %       		{\itshape}                      %%% Thm body font         
%		{}                              %%% Indent amount (empty = no indent)         
%		{\bfseries}                     %%% Thm head font         
%		{.}                             %%% Punctuation after thm head         
%		{ }                             %%% Space after thm head         
%		{\thmname{#1}\thmnote{ \bfseries #3}}%%% Thm head spec     
%	\theoremstyle{LemmaNum}
%	\newtheorem{lemman}{Lemma} 

%
% Proof-making commands and environments

\newcommand{\poly}{\operatorname{poly}}

\newcommand{\E}{\operatorname{E}}

%====header======

% math macros
\newcommand{\defeq}{\stackrel{\textrm{def}}{=}}

\long\def\symbolfootnote[#1]#2{\begingroup%
\def\thefootnote{\fnsymbol{footnote}}\footnote[#1]{#2}\endgroup} 

%==

\begin{document}
\begin{titlepage}
\title{A Distributed Minimum Cut Approximation Scheme}
\author{Hsin-Hao Su \footnote{This work is supported by NSF grants CCF-0746673, CCF-1217338, and CNS-1318294
and a grant from the US-Israel Binational Science Foundation. This work was done while visiting MADALGO at Aarhus University.}\\
University of Michigan}
\date{}
\end{titlepage}

\maketitle

\begin{abstract}
In this paper, we study the problem of approximating the minimum cut in a distributed message-passing model, the \textsf{CONGEST} model. The minimum cut problem has been well-studied in the context of centralized algorithms. However, there were no known non-trivial algorithms in the distributed model until the recent work of Ghaffari and Kuhn. They gave algorithms for finding cuts of size $O(\epsilon^{-1}\lambda)$ and $(2+\epsilon)\lambda$ in $O(D)+\tilde{O}(n^{1/2+\epsilon})$ rounds and $\tilde{O}(D+\sqrt{n})$ rounds respectively, where $\lambda$ is the size of the minimum cut. This matches the lower bound they provided up to a polylogarithmic factor. Yet, no scheme that achieves $(1+\epsilon)$-approximation ratio is known. We give a distributed algorithm that finds a cut of size $(1+\epsilon)\lambda$ in $\tilde{O}(D+\sqrt{n})$ time, which is optimal up to polylogarithmic factors.
\end{abstract}
\thispagestyle{empty}

\setcounter{page}{1}

\section{Introduction}
The minimum cut problem is a fundamental problem in graph algorithms and network design. Given a weighted undirected graph $G=(V,E)$, a cut $C=(S,V \setminus S)$ where $\emptyset \subset S \subset V$, is a partition of vertices into two non-empty sets. The weight of a cut, $w(C)$, is defined to be the sum of the edge weights crossing $C$. The minimum cut problem is to find a cut with the minimum weight. The exact version of the problem as well as the approximate version have been studied for many years \cite{Karger93, KS93, Karger94b, NI92, Matula93, Gabow95, SW97, Karger2000} in the context of centralized models of computation, resulting in nearly linear time algorithms \cite{Karger2000, Matula93, Karger94b}. 

Elkin \cite{Elkin04} and Das Sarma et al.~\cite{DasSarma12} addressed the problem in the distributed message-passing model. The problem has trivial time complexity of $\Theta(D)$ (unweighted diameter) in the \textsf{LOCAL} model, where the message size is unlimited. Ghaffari and Kuhn \cite{GK13} recently developed approximation algorithms for this problem in the \textsf{CONGEST} model where each message is bounded by $\Theta(\log n)$ bits. They assume that the edges of $G$ have integer weights from $\{1,\ldots, n^{\Theta(1)}\}$ and treat $G$ as an unweighted multigraph, where an edge $e$ with weight $w(e)$ is converted to $w(e)$ parellel edges, while still only $\Theta(\log n)$ bits can be sent over these parallel edges together in each round. Let $\lambda$ be the value of the minimum cut, they give an algorithm that finds a cut of size at most $O(\epsilon^{-1} \lambda)$ in $O(D) + O(n^{1/2+\epsilon} \log^{3} n \log \log n \log^{*} n)$ time. Moreover, they gave an algorithm that finds a cut of size at most $(2+\epsilon)\lambda$ in $O((D+\sqrt{n}\log^{*} n)\log^2 n \log \log n \frac{1}{\epsilon^{5}})$ time. Das Sarma et al.~\cite{DasSarma12} showed $\alpha$-approximating the minimum cut requires $\tilde{\Omega}(D+\sqrt{n})$ rounds for weighted graphs for any $\alpha \geq 1$. Ghaffari and Kuhn extended their lower bound for unweighted multigraphs (which is equivalent to the setting where one is allowed to send messages of size $w \cdot \Theta(\log n)$ over an edge of weight $w$ in weighted graphs). For unweighted simple graphs, they also gave a lower bound of $\tilde{\Omega}(D+\sqrt{n/\alpha})$. Therefore, the upper bound and lower bound provided by Ghaffari and Kuhn match up to a polylogarithmic factor. 

However, still no approximation algorithms exist for any approximation factor less than 2. In this paper, we give a simple algorithm that finds a minimum cut of size at most $(1+\epsilon)\lambda$ in $\tilde{O}(D+\sqrt{n})$ time. In particular, our algorithm runs in $O((\log^{11}n / \epsilon^{17})(D+\sqrt{n}\log^{*} n))$ rounds. 

Our approach uses the semi-duality between minimum cuts and tree packings as in \cite{Karger2000, Thorup07}. Karger \cite{Karger2000} showed that if we greedily pack enough trees, then for any minimum cut, there is a tree crossing the cut at most twice. However, it is technically not easy to utilize this fact to find minimum cuts in the distributed model. Instead, we use a lemma by Thorup \cite{Thorup07}, which shows that if we pack more trees then there is at least one minimum cut that is crossed by a tree exactly once. We take some ingredients from Ghaffari and Kuhn's algorithm and Thurimella's algorithm \cite{Thurimella97} for identifying biconnected components to devise a procedure that is able to simultanously test the values of the $n-1$ cuts induced by deleting one of the $n-1$ edges in a tree. Note that the number of trees we have to pack is polynomial in the value of the minimum cut. Thus, we will first use the sampling lemma of Karger \cite{Karger94} to obtain a sampled graph that scales the value of the minimum cut down to $O(\log n / \epsilon^2)$. Then we only have to pack polylogarithmic number of trees. Finally, we combine the resampling procedure, the tree packing, and the procedure for testing tree-induced-cuts to find an approximate minimum cut.

%In the distributed setting, each vertex $u$ output $(s_u,\lambda_C)$, where $s$ denoting 

\section{Distributed Minimum Cut Approximation}
Let $G$ be a connected graph with integer weights from $\{1,\ldots,W \}$, where $W = n^{\Theta(1)}$. We will treat $G$ as a multigraph with uniform edge weights.
Let $\lambda$ be the weight of the minimum cut of $G$. We show how to find such an approximate minimum cut whose weight is at most $(1+\epsilon)\lambda$. 

An edge $e$ is a {\it bridge} if it does not exist a cycle in $G$ passing $e$ (or equivalently, deleting $e$ breaks $G$ into two connected components). Given two graph $A$ and $B$ with the same vertex set, $A+B$ is the multigraph obtained by including edges in $A$ and edges in $B$.

A {\it tree packing} $\mathcal{T}$ is a multiset of spanning trees. The {\it load} of an edge $e$ with respect to $\mathcal{T}$ is the number of trees in $\mathcal{T}$ containing $e$. Given a tree $T$, we say a cut is {\it induced} by $T$ if such a cut is obtained by deleting an edge $e \in T$. We will denote this cut by $C(T,e)$. A tree packing $\mathcal{T} = \{T_1, \ldots, T_k \}$ is {\it greedy} if each $T_i$ is a minimum spanning tree with respect to the loads induced by $\{T_1,\ldots, T_{i-1} \}$. Let $\epsilon' = \Theta(\epsilon)$ such that $(1+\epsilon')^{3} / (1-\epsilon') = 1 + \epsilon$.

\begin{lemma}[Thorup \cite{Thorup07}]\label{lem:treepacking}A greedy tree packing with $96 (\lambda+1)^7 \log^3 m$ trees contains a tree crossing some min-cut only once.
\end{lemma}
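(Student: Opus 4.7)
The plan is to follow Thorup's strategy \cite{Thorup07}, which strengthens Karger's earlier result \cite{Karger2000} that a greedy packing of $\Theta(\log m)$ trees already contains a tree 2-respecting every min-cut. The goal here is to upgrade ``2-respects every min-cut'' to ``1-respects at least one min-cut,'' at the cost of a polynomial-in-$\lambda$ blow-up in the number of trees we pack.

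The two workhorses would be (i) Karger's cut-counting bound, which says that there are at most $O(n^{2\alpha})$ cuts of weight at most $\alpha\lambda$, and in particular only polynomially many exact min-cuts; and (ii) the multiplicative-weights/LP view of greedy tree packing, which says that after $\Omega(\log m)$ greedy rounds, the packing $(1-o(1))$-approximates the fractional optimum, itself at least $\lambda/2$ by Nash--Williams. Applied to a min-cut $C$ of weight $\lambda$, averaging gives that the mean number of crossings per tree is close to $2$. Combined with the trivial per-tree lower bound $|T\cap C|\geq 1$, this forces the crossing distribution on the $\lambda$ edges of $C$ to be very tightly concentrated around $2$.

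The technical heart, and the main obstacle, is the bootstrap from ``the average tree 2-respects $C$'' to ``\emph{some} tree 1-respects \emph{some} min-cut.'' The strategy is by contradiction: if every tree in the packing crosses every min-cut at least twice, then for each violation one can exhibit a local swap --- replace a tree edge lying in the cut by the chord of the induced fundamental cycle --- that strictly decreases the load of that cut-edge without breaking the spanning property, contradicting greediness once the packing is large enough. Carrying this out over all near-min-cuts simultaneously, so that fixing one violation does not create another, requires a union bound over Karger's $O(n^{2\alpha})$ near-minimum cuts with $\alpha$ slightly above $1$; this is where the $(\lambda+1)^7$ factor enters, and the $\log^3 m$ factor absorbs both the greedy-approximation error and the overhead of the union bound. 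I would not attempt to reprove the exponent $7$ from first principles; it comes out of Thorup's careful amortised balancing of these error terms, and any clean rederivation would almost certainly give a worse constant.
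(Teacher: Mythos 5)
Your proposal takes an entirely different route from the paper and, as written, does not establish the lemma as stated. The paper does not re-derive Thorup's theorem at all: it quotes, as a black box, Thorup's quantitative result that $24\lambda\ln m/\epsilon^2$ greedy trees suffice whenever $\frac{\epsilon(3+\log_{1+\alpha}m)}{\lambda}+\alpha < \frac{2}{\lambda(\lambda+1)}$ holds for some $\alpha<1$, then simply verifies that the choices $\alpha=\frac{1}{\lambda(\lambda+1)}$ and $\epsilon=\frac{1}{2(\lambda+1)^3\ln m}$ satisfy this inequality and substitutes to obtain $96(\lambda+1)^7\ln^3 m$. That parameter calculation is the whole proof; the exponent $7$ is the deliverable, since Thorup's original phrasing only gives $\omega(\lambda^7\log^3 m)$.

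Your sketch instead tries to reconstruct the internals of Thorup's argument --- cut counting, the MWU/LP near-optimality of greedy packings, Nash--Williams, and a bootstrap from Karger's ``2-respects every min-cut'' to ``1-respects some min-cut'' --- and then explicitly declines to reproduce the exponent $7$. This leaves the key quantitative claim unproved: you establish only that some polynomial in $\lambda$ suffices, not that $96(\lambda+1)^7\log^3 m$ does, and the latter is what the paper's running-time analysis consumes. A secondary concern is the ``local swap'' mechanism you propose for the bootstrap step (replace a tree edge in the cut by a chord of its fundamental cycle): Thorup's proof is an averaging and counting argument over near-minimum cuts using approximate optimality of the greedy packing, not a local exchange that fixes violations one at a time, and your own caveat --- ``so that fixing one violation does not create another'' --- names precisely the obstruction such an exchange argument would have to overcome without resolving it. To prove the lemma as stated, the efficient route is the paper's: cite Thorup's parametric bound and perform the substitution.
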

\begin{remark}The number of trees in the original statement of the lemma is $\omega(\lambda^7 \log^3 m)$, though the proof actually implies that $\Theta( \lambda^7 \log^3 m)$ is enough. In particular, Thorup showed $24\lambda \ln m /\epsilon^2$ trees is sufficient, where $\epsilon$ satisfies $\frac{\epsilon(3+\log_{1+\alpha} m)}{\lambda} + \alpha < \frac{2}{\lambda(\lambda+1)}$ for some $\alpha < 1$. We can choose $\alpha = \frac{1}{\lambda(\lambda+1)}$ and $\epsilon = \frac{1}{2(\lambda + 1)^3 \ln m}$ to make the inequality hold. Therefore, $96 (\lambda+1)^{7} \ln^3 m$ trees is sufficient.\end{remark}

We describe our algorithm in Algorithm \ref{alg:approxmincut}. The subroutine Test($T, \kappa$) returns a cut whose weight is at most $(1+\epsilon') \kappa$ w.h.p.~if there exists a cut in $G$ induced by $T$ with weight at most $\kappa$.

We show that w.h.p. the algorithm will output a cut $C$ with $w(C) \leq (1+\epsilon)\lambda$. In particular, consider the iteration $i$ where $\lambda \in [X_i, X_{i+1}]$. Let $\lambda'$ denote the value of the minimum cut in the sampled graph $H_i$. If $i = 0$, then it is clear that $\lambda' = \lambda \leq X_1 = 20\ln n / \epsilon'^2$. If $i>0$, since we sampled with probability $1/2^{i} = 20 \ln n / (\epsilon'^2 X_{i+1})= 10\ln n / (\epsilon'^2 X_{i}) \geq 10\ln n / (\epsilon'^2\lambda)$, we know that w.h.p. for any cut $C$ \cite[Corollary 2.4]{Karger94}, $$(1 - \epsilon') \cdot w_{G}(C) / 2^{i} \leq w_{H_{i}}(C) \leq (1 + \epsilon') \cdot w_{G}(C) / 2^{i}.$$
Therefore, $\lambda' \leq (1+\epsilon') \lambda / 2^{i} \leq (1+\epsilon')20\ln n / \epsilon'^2$. If we pack $96 (\lambda'+1)^7 \log^{3} m$ trees in $\mathcal{T}$, then by Lemma \ref{lem:treepacking} there exists a tree crossing some minimum cut $C^{*}$ of $H_i$ only once. Notice that for any other cut $C'$, $$w_{G}(C^{*}) \leq 2^{i} \cdot \frac{w_{H_i}(C^{*})}{1-\epsilon'} = 2^{i} \cdot \frac{\lambda'}{1-\epsilon'} \leq 2^{i} \cdot \frac{w_{H_{i}}(C')}{1-\epsilon'} \leq \frac{1+\epsilon'}{1-\epsilon'} \cdot w_{G}(C')$$

\begin{algorithm}[H]
\caption{$(1+\epsilon)$-approximate minimum cut}
\begin{algorithmic}[1]\label{alg:approxmincut}
\STATE $X_0 \leftarrow 1$
\STATE $i \leftarrow 0$
\REPEAT
\STATE $X_{i+1} \leftarrow 2^{i} \cdot 20\ln n /\epsilon'^2$
\STATE (We are assuming $\lambda \in [X_{i}, X_{i+1}]$ in this iteration)
\STATE Let $H_i$ be the subgraph sampled with probability $p = 1/2^{i}$ on each edge of $G$.
\STATE \label{line:treepacking}Find a greedy tree packing $\mathcal{T}$ with $96((1+\epsilon')20\ln n/\epsilon'^2 + 1)^{7} \ln^{3} m$ trees in $H_i$
\
\STATE $\gamma \leftarrow X_{i}$
\REPEAT \label{line:loopup}
\FOR{each $T \in \mathcal{T}$}
\STATE Call Test($T,(1+\epsilon')\gamma$).
\STATE If Test($T,(1+\epsilon')\gamma$) returns a cut $C$, output $C$ and terminate.
\ENDFOR
\STATE $\gamma \leftarrow (1+\epsilon')\gamma$
\UNTIL{$\gamma > \frac{1+\epsilon'}{1-\epsilon'}\cdot X_{i+1}$} \label{line:loopdown}
\STATE $i \leftarrow i+1$
\UNTIL{$X_{i+1} > nW$}
\end{algorithmic}
\end{algorithm}

Therefore, one of the cuts induced by some $T \in \mathcal{T}$ is an $(1+\epsilon')/(1-\epsilon')$ approximate minimum cut. Denote this cut by $C'$, so $w(C') \in [X_i, ((1+\epsilon')/(1-\epsilon')) \cdot X_{i+1}]$. Therefore in the $i$'th iteration, there exists $\gamma$ in the loop (Line \ref{line:loopup}--Line \ref{line:loopdown}) such that $w(C') \in [\gamma, (1+\epsilon')\gamma]$. So w.h.p.\ we will output a cut with weight at most $(1+\epsilon')^2\gamma \leq (1+\epsilon')^2w(C') \leq (1+\epsilon')^3/(1-\epsilon')w(C^{*}) = (1+\epsilon)w(C^{*})$.

\subsection{Distributed Implmentation}

We have shown the correctness of this algorithm. It remains to show how to implement it in $\tilde{O}(D+\sqrt{n})$ distributed rounds, and in particular, to implement the tree packing (Line \ref{line:treepacking}) and Test($T,\kappa$) in Algorithm \ref{alg:approxmincut}. To pack $k$ trees, it is striaghtfoward to apply $k$ MST computations on the graph where the edge weights are equal to the number of trees including it. This can be done in $O(k(D + \sqrt{n} \log^{*} n))$ rounds \cite{KP95}.

Given a partition $\mathcal{P}$ of $G$ into components, Ghaffari and Kuhn \cite{GK13} devised a testing procedure to test if there is a cut induced by a component in $\mathcal{P}$ that has weight less than $\kappa$ in $\widetilde{O}(D+\sqrt{n})$ rounds. Given a spanning tree $T$, we will show how to test the $n-1$ cuts induced by $T$ also in $\widetilde{O}(D+\sqrt{n})$ rounds. 

\begin{algorithm}[H]
\caption{Test$(T,\kappa)$. Test($T, \kappa$) returns a cut whose weight is at most $(1+\epsilon') \kappa$ w.h.p.~if there exists a cut in $G$ induced by $T$ with weight at most $\kappa$. Note that the sample probability $1-2^{-1/\kappa} = \Theta(1/\kappa)$.}
\begin{algorithmic}[1]\label{alg:testing}
\FOR{$i \leftarrow 1\ldots k = \Theta(\frac{\log n}{\epsilon^2})$}
	\STATE Let $G_i$ be the subgraph obtained by sampling each edge of $G$ independently with probability $1 - 2^{-1/\kappa}$.
	\STATE For each edge $e \in T$, determine if $e$ is a {\bf bridge} in the graph $G_i + T$. \STATE Let $Y_{e,i}=\begin{cases} 1& \mbox{if $e$ is not a bridge in the graph $G_i+T$.} \\ 0 & \mbox{otherwise.}\end{cases} $
\ENDFOR
\STATE If there is $e \in T$ such that $\sum_{i=1}^{k} Y_{e,i} \leq k/2 + \epsilon'k/8$, then return the cut $C(T,e)$
\end{algorithmic}
\end{algorithm}

\begin{lemma} If $T$ induces a cut $C(T,e)$ with weight at most $\kappa$, then Test($T,\kappa$) returns a cut w.h.p. Moreover, any cut returned by the algorithm has weight at most $(1+\epsilon')\kappa$ w.h.p. \end{lemma}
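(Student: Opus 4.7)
The plan is to recognize that each $Y_{e,i}$ is the indicator of a simple cut-based event, and then combine Chernoff bounds with a union bound over the $n-1$ tree edges.

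The key structural observation is a bridge characterization. Removing $e$ from $T$ splits $T$ into two components whose vertex sets form the cut $C(T,e)$, and by construction no edge of $T \setminus \{e\}$ crosses this cut. Hence $e$ is a bridge in the multigraph $G_i + T$ if and only if $G_i$ contributes no edge crossing $C(T,e)$. Writing $w_e := w(C(T,e))$ and $p = 1 - 2^{-1/\kappa}$, the independence of sampling yields
\[
\Pr[Y_{e,i} = 0] \;=\; (1-p)^{w_e} \;=\; 2^{-w_e/\kappa}, \qquad \mu_e \;:=\; \E[Y_{e,i}] \;=\; 1 - 2^{-w_e/\kappa},
\]
which is monotonically increasing in $w_e$.

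For completeness, suppose some $e^\star \in T$ has $w_{e^\star} \leq \kappa$. Then $\mu_{e^\star} \leq 1 - 2^{-1} = 1/2$, and Hoeffding's inequality applied to the sum of independent $\{0,1\}$ variables gives
\[
\Pr\!\left[\sum_{i=1}^{k} Y_{e^\star,i} \;>\; \tfrac{k}{2} + \tfrac{\epsilon' k}{8}\right] \;\leq\; \exp\!\bigl(-\Omega(\epsilon'^2 k)\bigr).
\]
Taking $k = \Theta(\log n / \epsilon'^2)$ with a sufficiently large hidden constant drives this below $n^{-c}$ for any desired $c$, so with high probability $e^\star$ passes the threshold and the test returns a cut.

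For soundness, fix any $e \in T$ with $w_e > (1+\epsilon')\kappa$. Since $x \mapsto 2^{-x}$ is convex on $[0,1]$ and coincides with its secant $1 - x/2$ at the endpoints, $2^{-\epsilon'} \leq 1 - \epsilon'/2$ for $\epsilon' \in [0,1]$, so
\[
\mu_e \;>\; 1 - 2^{-(1+\epsilon')} \;=\; 1 - \tfrac{1}{2}\cdot 2^{-\epsilon'} \;\geq\; \tfrac{1}{2} + \tfrac{\epsilon'}{4}.
\]
Thus $\E[\sum_i Y_{e,i}] \geq k/2 + \epsilon' k/4$, exceeding the algorithmic threshold $k/2 + \epsilon' k/8$ by at least $\epsilon' k / 8$, and a second Chernoff bound gives $\Pr[\sum_i Y_{e,i} \leq k/2 + \epsilon' k/8] \leq n^{-\Omega(1)}$. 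A union bound over the at most $n-1$ tree edges ensures that w.h.p.\ no edge with $w_e > (1+\epsilon')\kappa$ is ever returned, proving both parts of the lemma.

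The only delicate point is the constant-factor calibration: one must check that the linear-in-$\epsilon'$ gap $\mu_e - 1/2$ available in the soundness regime strictly exceeds the padding $\epsilon'/8$ baked into the test, so that a Chernoff deviation of $\Omega(\epsilon' k)$ handles both the completeness test for a single edge and the soundness union bound over $n-1$ edges simultaneously, all within the budget $k = \Theta(\log n / \epsilon'^2)$.
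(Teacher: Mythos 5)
Your proof is correct and follows essentially the same route as the paper: the same bridge-crossing equivalence to compute $\E[Y_{e,i}] = 1 - 2^{-w_e/\kappa}$, the same completeness/soundness split with $\mu_e \le 1/2$ and $\mu_e \ge 1/2 + \epsilon'/4$ respectively, the same inequality $2^{-\epsilon'} \le 1-\epsilon'/2$, and the same Hoeffding-plus-union-bound conclusion. The only addition is your convexity justification of $2^{-\epsilon'} \le 1-\epsilon'/2$, which the paper asserts without comment; this is a harmless elaboration, not a different argument.
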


\begin{proof}

Consider a cut $C(T,e)$. First observe that $G_i$ contains an edge crossing $C(T,e)$ if and only if $e$ is not a bridge in the graph $G_i + T$. Therefore, $\E[Y_{e,i}] = 1 - (1 - (1 - 2^{-1/\kappa}))^{w(C(T,e))} = 1 - 2^{-w(C(T,e))/\kappa}$. 
% Because if $G_i$ contains an edge $f$ crossing $C(T,e)$, then $\{f\}$ and the unique path in $T$ connecting the endpoints of $f$ forms a cycle passing $e$. On the other hand, if $e$ is a 

If there is $C(T,e) \leq \kappa$, then $\E[Y_{e,i}] \leq 1/2$ and $\E[\sum_i Y_{e,i}] \leq k/2$. By Hoeffiding's inequality, $\Pr(\sum_i Y_{e,i} > k/2 + \epsilon' k/ 8) \leq \Pr(\sum_i Y_{e,i} > \E[\sum_i Y_{e,i}] + \epsilon' k/ 8) \leq e^{-\frac{2(\epsilon'k/8)^2}{k}} = e^{-\epsilon'^2 k/32} = 1/\poly(n)$. By taking the union bound over the $n-1$ cuts induced by $T$, we conclude that w.h.p. the algorithm will return a cut if there is cut whose weight is at most $\kappa$.

On the other hand if $w(C(T,e)) > (1+\epsilon')\kappa$, then $\E[Y_{e,i}] = 1 - 2^{-{1-\epsilon'}} \geq 1/2 + \epsilon' / 4$ when $\epsilon' \leq 1$, since $2^{-\epsilon'} \leq 1 - \epsilon'/2$ when $\epsilon' \leq 1$. So $\E[\sum_i Y_{e,i}] \geq k/2 + \epsilon'k/ 4$. By Hoeffiding's inequality, $\Pr(\sum_i Y_{e,i} \leq k/2 + \epsilon' k/ 8) \leq \Pr(\sum_i Y_{e,i} \leq \E[\sum_i Y_{e,i}] - \epsilon' k / 8) \leq e^{-\frac{2(\epsilon'k/8)^2}{k}} = e^{-\epsilon'^2 k/32} = 1/\poly(n)$. By taking the union bound over the $n-1$ cuts induced by $T$, we conclude the cut returned by the algorithm has weight at most $(1+\epsilon')\kappa$ w.h.p.
\end{proof}

\subsection{Computing the Bridges}
Given a subgraph $G_i$ of $G$, it remains to show how to determine what edges of $T$ are bridges in the subgraph $T +G_i$ in $\tilde{O}(D+\sqrt{n})$ rounds. Thurimella \cite{Thurimella97} gave an algorithm for computing the biconnected components of the underlying graph in $\tilde{O}(D+\sqrt{n})$ rounds. With simple modifications, it can be applied to compute which edges of $T$ are bridges in the {\it subgraph} $G_i$ of the underlying graph $G$. Note that even we have the algorithm for computing the bridges of $T$ in $G+T$, it is not clearly whether we can directly simulate it to compute the bridges of $T$ in $G_i+T$, because we want the running time to depend on the diameter of $G$ rather than that of $G_i$. Therefore, we describe the algorithm and necessary changes below.

\renewcommand\thefootnote{\fnsymbol{footnote}}
Fix a root $r$ in $T$. Let $\mathit{pre}(u) \in [0,n-1]$ be the preorder number which denote the time $u$ is visited if we perform a depth-first search on $T$ starting at $r$. Denote the subtree rooted at $u$ by $T_u$ and let $\mathit{size}(u)$ be the size of $T_u$. Let
\addtocounter{footnote}{1} 
\begin{align*}
\mathit{low}(u) &\defeq \min \begin{cases}\mathit{pre}(u) \\ \mathit{low}(v) & \mbox{$v$ is a child of $u$ in $T$} \\ \mathit{pre}(v) & \mbox{$uv \in G_i $ \footnotemark}\end{cases}
\\ \mathit{high}(u) &\defeq \max \begin{cases}\mathit{pre}(u) \\ \mathit{high}(v) & \mbox{$v$ is a child of $u$ in $T$} \\ \mathit{pre}(v) & \mbox{$uv \in G_i$ \addtocounter{footnote}{-1}\footnotemark }\end{cases}
\end{align*}
 \footnotetext[2]{It can be the case that $v$ is the parent of $u$ in $T$, which happens when there are parallel edges between $u$ and $v$ in $G_i + T$, and one of them is in $T$. Note that an edge is not a bridge if it is a multiedge.}
\begin{lemma}Let $uv \in T$ with $v$ being a child of $u$, i.e.~$\mathit{pre}(u) < \mathit{pre}(v)$. $uv$ is a bridge if and only if $\mathit{low}(v) \geq \mathit{pre}(v)$ and $\mathit{high}(v) \leq \mathit{pre}(v) + \mathit{size}(v) - 1$.\end{lemma}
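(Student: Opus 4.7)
My plan is to reduce the bridge condition to a statement about which preorder numbers are reachable from the subtree $T_v$ via edges in $G_i$, and then verify by induction that $\mathit{low}$ and $\mathit{high}$ compute exactly the right extremes.

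First, I would make the structural observation that removing $uv \in T$ (with $v$ a child of $u$) partitions $V$ into $T_v$ and $V \setminus T_v$, and that $uv$ is the unique edge of $T$ crossing this cut. Hence $uv$ is a bridge in $G_i + T$ iff no edge of $G_i$ crosses the cut $(T_v, V \setminus T_v)$ (this handles the multi-edge subtlety flagged in the footnote: a parallel copy of $uv$ in $G_i$ is an edge of $G_i$ crossing the cut, so $uv$ is not a bridge, as desired). Using that a DFS preorder assigns $T_v$ exactly the contiguous block $[\mathit{pre}(v),\ \mathit{pre}(v)+\mathit{size}(v)-1]$, the bridge condition is equivalent to: for every edge $yx \in G_i$ with $y \in T_v$, the preorder number $\mathit{pre}(x)$ lies inside that interval.

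Next I would prove, by induction on the height of $v$ in $T$, the two identities
\[
\mathit{low}(v) = \min\bigl(\{\mathit{pre}(y) : y \in T_v\} \cup \{\mathit{pre}(x) : yx \in G_i,\ y \in T_v\}\bigr),
\]
\[
\mathit{high}(v) = \max\bigl(\{\mathit{pre}(y) : y \in T_v\} \cup \{\mathit{pre}(x) : yx \in G_i,\ y \in T_v\}\bigr).
\]
The base case (leaf $v$) is immediate from the definitions. For the inductive step, the recursive defintion sweeps over (i) $\mathit{pre}(v)$ itself, (ii) the recursively-computed $\mathit{low}/\mathit{high}$ of each child, which by induction already ranges over the corresponding set for that child's subtree, and (iii) edges $vx \in G_i$ incident at $v$ itself; their union is exactly the set above.

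Finally, since $\mathit{pre}(y)$ for $y \in T_v$ always lies in $[\mathit{pre}(v),\ \mathit{pre}(v)+\mathit{size}(v)-1]$, the self-contribution of the tree is harmless, and the two lemma conditions $\mathit{low}(v) \geq \mathit{pre}(v)$ and $\mathit{high}(v) \leq \mathit{pre}(v)+\mathit{size}(v)-1$ together say exactly that every edge $yx \in G_i$ with $y \in T_v$ satisfies $\mathit{pre}(x) \in [\mathit{pre}(v),\ \mathit{pre}(v)+\mathit{size}(v)-1]$, i.e.\ $x \in T_v$. Combined with the first paragraph's reduction, this is precisely the statement that $uv$ is a bridge in $G_i + T$.

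The only mildly delicate point is keeping the multigraph bookkeeping honest: the reduction in the first step must treat parallel edges in $G_i + T$ as independent, and the induction step must count an edge $vx$ with $vx \in T$ and $vx \in G_i$ (i.e.\ a parallel copy) in clause (iii) so that its contribution is not silently discarded. With the definitions stated in the paper this works cleanly, so I do not expect a serious technical obstacle beyond writing the induction carefully.
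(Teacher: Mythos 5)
Your proof is correct and follows essentially the same approach as the paper: both rely on the contiguous-interval property of preorder numbers in $T_v$, and both reduce the bridge condition to whether any $G_i$-edge from $T_v$ reaches a preorder number outside $[\mathit{pre}(v),\ \mathit{pre}(v)+\mathit{size}(v)-1]$. The only difference is presentational: you make explicit, via induction on the tree, that $\mathit{low}/\mathit{high}$ compute exactly the extreme reachable preorder values, whereas the paper treats this as immediate from the recursive definitions and instead argues the two directions directly by exhibiting or ruling out a cycle through $uv$.
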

\begin{proof}
First notice that every vertex $x \in T_v$ must have $\mathit{pre}(x) \in [\mathit{pre}(v), \mathit{pre}(v) + \mathit{size}(v) - 1]$. If $uv$ is a bridge, then no descendent of $v$ will be adjacent to anything outside the subtree rooted at $v$, for otherwise a cycle passing $uv$ will be created. Therefore, $\mathit{low}(v), \mathit{high}(v) \in [\mathit{pre}(v), \mathit{pre}(v) + \mathit{size}(v) - 1]$.

On the other hand, if $\mathit{low}(v) < \mathit{pre}(v)$ or if $\mathit{high}(v) \geq \mathit{pre}(v) + \mathit{size}(v)$, then there exists a vertex $y \in T_v$ and $z \notin T_v$ such that $y$ and $z$ are adjacent. Since $z \notin T_v$, there must exists a path from $z$ to $u$ such that it does not pass $uv$. Therefore, $u \to v \leadsto y \to z \leadsto u$ forms a cycle and $uv$ is not a bridge.
\end{proof}
\begin{remark}Note that the second condition $\mathit{high}(v) \leq \mathit{pre}(v) + \mathit{size}(v) - 1$ is needed because $T$ is not necessarily a DFS tree. \end{remark}

Now it remains to show how to compute $\mathit{pre}(u)$, $\mathit{low}(u)$, and $\mathit{high}(u)$ in $\tilde{O}(D+\sqrt{n})$ time. It is explicitly described in \cite{Thurimella97} how to compute $\mathit{pre}(u)$. Note that $\mathit{pre}(u)$ is independent of the $G_i$. Although $\mathit{low}(u)$ and $\mathit{high}(u)$ depend on $G_i$, they can be computed in a similar way in $\tilde{O}(D+\sqrt{n})$ time. For completeness, we describe how to compute these functions in the following.

\begin{lemma}[\cite{GKP93,KP95}]\label{lem:decompose} A tree of $n$ vertices can be divided into $O(\sqrt{n})$ connected subgraphs each of diameter $O(\sqrt{n})$ in $O(\sqrt{n}\log^{*}n)$ time\end{lemma}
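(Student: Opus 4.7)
The plan is to construct the decomposition via a Boruvka-style fragment-merging procedure on $T$. Start with each vertex as its own fragment and iteratively merge fragments until every fragment has size between $\sqrt{n}$ and $O(\sqrt{n})$. The cardinality bound $O(\sqrt{n})$ then follows because every fragment has size at least $\sqrt{n}$, and the diameter bound $O(\sqrt{n})$ follows because a tree fragment of size $s$ has diameter at most $s-1$, so both goals reduce to a single size bound.

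At each phase, the current fragments induce a contracted tree $T'$, since contracting connected subtrees of a tree preserves the tree structure. First compute a 3-coloring of $T'$ in $O(\log^{*} n)$ simulated rounds of $T'$ via Linial's algorithm. Then run three subphases, one per color class; in the subphase for color $c$, every color-$c$ fragment whose current size is below $\sqrt{n}$ sends a merge request to one chosen neighbor in $T'$, and every fragment accepts at most one request. Because one color class is an independent set in $T'$, merges within a single subphase cannot chain: the receiver of a request has a different color and is not itself initiating in this subphase. Consequently a fragment's size at most doubles per subphase, and after $O(\log n)$ subphases every surviving fragment has size $\Theta(\sqrt{n})$.

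For the running time, one round on $T'$ can be simulated by $O(d)$ rounds on $T$, where $d$ is the maximum fragment diameter at that moment, since the fragment leader must broadcast into and gather responses from its fragment. A subphase therefore costs $O(d \log^{*} n)$ rounds on $T$. Since $d$ follows a geometric progression capped at $O(\sqrt{n})$, the per-subphase costs form a geometric series that telescopes to $O(\sqrt{n} \log^{*} n)$ total. The main obstacle I expect is controlling diameter alongside size: without color-class scheduling and the one-accept rule, chained merges in a single subphase could produce fragments whose diameter far exceeds what the size bound would permit. Restricting merges within one subphase to an independent set, and capping each fragment at one absorbed neighbor per subphase, is precisely the mechanism that keeps diameter tied back to the $s-1$ bound derived from size.
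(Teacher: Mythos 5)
The paper does not prove this lemma; it cites it from Garay--Kutten--Peleg and Kutten--Peleg, where the construction is the ``Controlled-GHS'' fragment-growing procedure. Your Boruvka-style plan is in the right family, but the specific merge rule you chose breaks the cardinality bound. The problem is the combination of ``only fragments of size below $\sqrt{n}$ initiate'' with ``every fragment accepts at most one request.'' Consider a star: one center adjacent to $n-1$ leaves. In every phase the center-containing fragment can absorb at most a constant number of leaves (one when it initiates, one when it accepts), so after $O(\log n)$ phases it has size $O(\log n)$ and $\Omega(n)$ singleton leaves remain. Your claim that ``after $O(\log n)$ subphases every surviving fragment has size $\Theta(\sqrt{n})$'' is therefore false, and with it the step ``the cardinality bound follows because every fragment has size at least $\sqrt{n}$.'' The doubling argument also does not hold as stated: a receiver of size $s$ that absorbs a requester of size up to $\sqrt{n}$ grows to $s+\Theta(\sqrt{n})$, which is not within a factor $2$ of $s$ when $s$ is small, and if large fragments keep accepting (the only way to avoid starvation under your rule) they can grow by $\Theta(\sqrt{n})$ per phase, wrecking both the diameter bound and the geometric-series running-time argument.

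The missing idea is that the number of fragments must be forced to drop by a constant factor \emph{every} phase, independent of local contention, while the diameter is controlled by the merging schedule rather than by the fragment size. In Controlled-GHS this is achieved by letting every small fragment choose one outgoing edge, and then merging along a structure (a maximal matching on the chosen edges, with unmatched proposers star-merged into a matched endpoint, or an equivalent star-decomposition of the proposal forest) that guarantees (i) every proposing fragment is merged in this phase, so the count of non-terminated fragments halves, and (ii) each surviving fragment absorbs only a bounded-depth neighborhood, so its diameter grows by at most a constant factor plus an additive constant. After $\tfrac{1}{2}\log_2 n$ such phases one has at most $\sqrt{n}$ fragments with diameter $O(\sqrt{n})$, and the per-phase cost $O(d_i \log^{*} n)$ with $d_i$ doubling does telescope to $O(\sqrt{n}\log^{*} n)$ as you intended. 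So the skeleton of your argument (3-coloring of $T'$, simulation cost $O(d)$ per round, geometric telescoping) is sound, but the merge-acceptance rule and the ``size implies both bounds'' reduction need to be replaced by the halving-plus-diameter-control argument above.
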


First, use Lemma \ref{lem:decompose} to decompose the rooted tree $T$ into components $F_1,\ldots F_{O(\sqrt{n})}$. For each component $F_i$, there is a root $r_i$ which is either the root of $T$, $r$, or the unique vertex in $F_i$ connecting to its parent outside $F_i$. It is shown in \cite{Peleg} that the root $r$ is able to downcast distinct messages of size $O(\log n)$ to each of $r_i$ in $O(D+\sqrt{n})$ time. Conversely, it is possible for each of the $r_i$ to upcast a message of size $O(\log n)$ to the root $r$ in $O(D+\sqrt{n})$ time. 

Suppose each vertex has a unique ID. The component ID of $F_i$ is defined to be the ID of $r_i$. The component ID can be broadcast to every vertex in the component in $O(\sqrt{n})$ rounds. We can then assume that the root $r$ knows the topology of the contracted tree where each component is contracted into a single vertex. This can be done if every root $r_i$ upcasts a message about the component ID of its parent and itself.

To compute $\mathit{pre}(u)$, each root $r_i$ in each component first calculate the size of $F_i$ then upcast it to $r$. Since $r$ knows the topology of the contracted tree, $r$ can calculate the size of each subtree rooted at each of $r_i$. Then $r$ downcasts the size of subtree rooted at $r_i$ back to $r_i$. Now each $F_i$ computes its preorder number internally in $O(\sqrt{n})$ time assuming $r_i$ has number $0$. During the computation, each $r_i$ also records what its preorder number is supposed to be if the depth-first search started from the root of its parent component. Finally, each $r_i$ upcasts this number to $r$ and then $r$ computes the correct offset for each subtree and downcasts the offsets back to the $r_i$. After adding the offset internally, we get the correct preorder number.

To compute $\mathit{low}(u)$, initially each vertex $u$ computes $\min(\mathit{pre}(u),\min_{uv \in G_i}\mathit{pre}(v))$ in constant rounds. Then the problem becomes aggregating the minimum in the subtree $T_u$ for each $u$. First, each $r_i$ computes the minimum in $F_i$ in $O(\sqrt{n})$ time and then upcasts to $r$. Using the information, $r$ calculates the minimum of the subtrees rooted at each $r_i$ and downcasts to each $r_i$. Now each $r_i$ sends the minimum to its parent via the inter-component links. The parent replace its minimum if it is smaller. Finally, each component $F_i$ internally updates the minimum toward the root $r_i$. Then each vertex has the correct minimum. $\mathit{high}(u)$ can be computed in the same way.

Therefore, the step of computing the bridges in $T$ of $G_i + T$ takes $O(D + \sqrt{n} \log^{*} n)$ time. Each invocation of Test($T,\kappa$) takes $O(\frac{\log n}{\epsilon^2}(D+\sqrt{n} \log^{*} n))$ time. 

\subsection{Running Time}
Now we analyze the running time of Algorithm \ref{alg:approxmincut}. The outerloop runs for $O(\log n)$ iterations. Therefore, the tree packing, Line \ref{line:treepacking}, is executed $O(\log n)$ times, each taking $O(\log^{10} n / \epsilon^{14}(D + \sqrt{n}\log^{*} n))$ rounds.

Let $k = O(\log (nW))$ be the largest index such that $X_k \leq nW$. The total number of iterations that the innerloop runs is at most $$\sum_{i=0}^{k} \log_{1+\epsilon'} \left(\frac{1+\epsilon'}{1-\epsilon'} \cdot \frac{X_{i+1}}{X_{i}}\right)
= O(k) + \sum_{i=0}^{k} \log_{1+\epsilon'}\frac{X_{i+1}}{X_i} = O(k) + \log_{1+\epsilon'} (X_{k+1}) = O(\log n/ \epsilon)$$
Therefore, Test($T,\kappa$) is invoked at most $O( (\log n / \epsilon)\cdot (\log^{10} n / \epsilon^{14}) )$ times, each taking $O((\log n/ \epsilon^2)(D + \sqrt{n} \log^{*} n))$ rounds. 

The total running time is
\begin{align*}&O(\log n \cdot (\log^{10} n / \epsilon^{14})(D + \sqrt{n}\log^{*} n) ) + (\log^{11} n / \epsilon^{15})\cdot((\log n/ \epsilon^2)(D + \sqrt{n} \log^{*} n)) \\ &= O((\log^{12} n / \epsilon^{17}) \cdot (D + \sqrt{n}\log^{*} n) ) = \tilde{O}(D+\sqrt{n}) \end{align*}

\begin{remark}\label{rmk:combinedGK} The total iterations of the outerloop and innerloop in Algorithm \ref{alg:approxmincut} can be reduced to $O(1)$ and $O(1/ \epsilon)$ by first approximating $\lambda$ within constant factor by Ghaffari and Kuhn's algorithm. Then, we can reduce our running time to $O((\log^{11}n / \epsilon^{17})(D+\sqrt{n}\log^{*} n))$.\end{remark}

%\begin{remark}
The exponent of the $\log n$ and the $\epsilon$ in our running time depends heavily on the size of the greedy tree packing in Lemma \ref{lem:treepacking}. If one can show that $O(\lambda^{a} \log^b n)$ trees is sufficient, then our running time can be improved to $O((\log^{2+a+b} n / \epsilon^{2a+3})\cdot(D+ \sqrt{n} \log^{*} n) )$ rounds. Using Ghaffari and Kuhn's algorithm to approximate $\lambda$ within a constant (Remark \ref{rmk:combinedGK}), we can get a running time of $O((\log^{1+a+b} n / \epsilon^{2a+3} + (\log^2 n \log \log n)/\epsilon^5)\cdot(D+ \sqrt{n} \log^{*} n) )$. For comparison, Karger \cite{Karger2000} showed that a greedy tree packing of size $O(\lambda \log n)$ is enough for any minimum cut to be crossed at most twice by some tree. It will be interesting to see if the number of trees in Lemma \ref{lem:treepacking} can be reduced.
%\end{remark}

\bibliography{mybib_short}
\bibliographystyle{plain}

\end{document}